\def\leq{\leqslant}
\def\geq{\geqslant}
\newtheorem{theorem}{Theorem}[section]
\newtheorem{definition}{Definition}[section]
\newtheorem{lemma}[theorem]{Lemma}
\newtheorem{corollary}[theorem]{Corollary}
\newtheorem{remark}[theorem]{Remark}
\begin{document}

\title{A Mathematical Model for Signal's Energy at the Output of an Ideal DAC.}
\author{\authorname{Paola Loreti\sup{1} and Pierluigi Vellucci\sup{1}}
\sup{1}Dipartimento di Scienze di Base e Applicate per l'Ingegneria,\\ Via Antonio Scarpa n. 16, 00161 Roma.
\email{paola.loreti@sbai.uniroma1.it., pierluigi.vellucci@sbai.uniroma1.it}
}

\keywords{Timing jitter, digital-to-analog converters, bandlimited interpolation, energy estimate}

\abstract{The presented research work considers a mathematical model for energy of the signal at the output of an ideal DAC, in presence of sampling clock jitter. When sampling clock jitter occurs, the energy of the signal at the output of ideal DAC does not satisfies a Parseval identity. Nevertheless, an estimation of the signal energy is here shown by a direct method involving sinc functions.}

\onecolumn \maketitle \normalsize \vfill

\section{Introduction}
Interpolation based on
\begin{equation}
\label{eq:19}
f(t)=\sum_{n\in\mathbb Z} a_n \operatorname{sinc}(t-n)
\end{equation}
is usually called \emph{ideal bandlimited interpolation}, because it provides a perfect reconstruction for all $t$, if $f(t)$ is bandlimited in $f_m$ and if the sampling frequency $f_s$ is such that $f_s\geq 2f_m$. The \emph{sinc function} in (\ref{eq:19}) is defined as
\begin{equation}
\label{eq:sinc}
 {\operatorname{sinc}}(\alpha)=
   \begin{cases}
   \frac{ \sin(\pi \alpha)}{\pi \alpha}  \qquad &\alpha \not= 0,\\
   1\qquad & \alpha=0.
    \end{cases}
\end{equation}
The system used to implement (\ref{eq:19}), which is known as an \emph{ideal DAC} (i.e. digital-to-analog converter, see \cite{Mano11}), is depicted in block diagram form in figure \ref{fig:1}.

DACs are essential components for measuring instruments (such as arbitrary waveform signal generators) and communication systems (such as transceivers). Since higher sampling speed is being demanded for them, their \emph{sampling clock jitter} effects may be crucial. Jitter is the deviation of a signal's timing event from its intended (ideal) occurrence in time, often in relation to a reference clock source. Therefore, time jitter is an important parameter for determining the performance of digital systems. For a review how time jitter impacts the performance of digital systems, see \cite{Re05}. For digital sampling in analog-to-digital and digital-to-analog converters, it is shown that noise power or multiplicative decorrelation noise generated by sampling clock jitter is a major limitation on the bit resolution (effective number of bits) of these devices, \cite{Re05}.

\begin{figure}[tb]
\centering
\tikzstyle{int}=[draw, fill=blue!20, minimum size=6em]
\tikzstyle{init} = [pin edge={to-,thin,black}]
\begin{tikzpicture}[node distance=2.5cm,auto,>=latex']
    \node [int, pin={[init]above:CLOCK}] (a) {Ideal DAC};
    \node (b) [left of=a,node distance=2cm, coordinate] {a};
    \node  (c) [right of=a] {};
    \node [coordinate] (end) [right of=c, node distance=2cm]{};
    \path[->] (b) edge node {$a_k$} (a);
    \path[->] (a) edge node {$f(t)$} (c);
\end{tikzpicture}
\caption{Representation of the ideal digital-to-analog converter (DAC) or ideal bandlimited interpolator. According to \ref{eq:19}.}
\label{fig:1}
\end{figure}
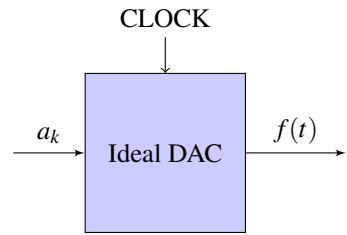

In \cite{Kur02} authors analyze the clock jitter effects on DACs, (Fig. 1 therein), considering a DAC where a digital input is applied with a sampling clock CLK. Ideally the sampling clock CLK operates with a sampling period of $T_s$ for every cycle, however in reality its timing can fluctuate (see Fig. 2 in \cite{Kur02}). Phase and frequency fluctuations have therefore been the subject of numerous studies; well-known references include: \cite{Ab06}, \cite{Demi00}, \cite{Haji98}, \cite{Raza96}.

As it has been well argued in previous works (\cite{Angri09}, \cite{Kur02}), theory dealing with major aspects concerning DAC time base jitter, quantization noise, and nonlinearity is still incomplete; unexpected changes and distortions of waveforms generated via DAC are occasionally supported by simulations and barely investigated by means of experimental activities, \cite{Angri09} and references therein. See also: \cite{Co}, where stochastic analysis is presented in order to predict the average switching rate; \cite{Shi}, where time jittering is modeled as a random variable uniformly distributed; \cite{Ale10}, \cite{Tat}, where jitter effect is assumed as a random variable normally distributed.

In \cite{Angri09}, authors focus on zero-order-hold DACs and, in particular, on how the presence of jitter that can affect their time base modifies the desired features of the analog output waveform. They study more deterministic jitter and develop an analytical model which is capable of describing the spectral content of the analog signal at the output of a DAC, the time base of which suffers from (or is modulated by) sinusoidal jitter. See also: \cite{Gu}, which proposes a first order analytical model describing the influence of the sampling clock modulated by a periodic jitter; \cite{Dap10}, where is introduced a model capable of describing the functioning of a real DAC affected by horizontal quantization, clock modulation, vertical quantization and integral nonlinearity.

In this paper we prove one-sided energy inequality for the output signal of an ideal DAC, in presence of \emph{sampling clock jitter}. Although the energy inequality can be derived for the Fourier transform by the system of complex exponentials \cite{Ing}, here we present a direct proof, based on sinc functions and on the result showed in \cite{Mont}.
Denoting with $f(t)$ the signal, we refer to the following definition of energy.
\begin{definition}
\label{def:1}
The energy in the signal $f(t)$ is
$$E_f:=\int_{-\infty}^\infty |f(t)|^2 dt.$$
\end{definition}
We also denote jitter as $\epsilon_n$, then the $n$-th sampling timing of CLK is $nT_s+\epsilon_n$ instead of $nT_s$. Since we have assumed that $T_s=1$, in our paper sampling timing of CLK is $n+\epsilon_n$ but the results for $T_s\neq 1$ one can obtain in an obvious way. Hence, equation (\ref{eq:19}) becomes
\begin{equation}
\label{eq:a}
f(t)=\sum_{n\in\mathbb Z} a_n \operatorname{sinc}(t-\lambda_n),
\end{equation}
where $\lambda_n=n+\epsilon_n$.
Results obtained here concern a generalization of the Parseval's identity for the sequence of functions $\{\operatorname{sinc}(t-\lambda_n)\}_{n\in\mathbb Z}$, where $\lambda_n\in\mathbb R$. In fact, it is well-known that, for a signal such that
$$f(t)=\sum_{n\in\mathbb Z} a_n \operatorname{sinc}(t-n),$$
its energy is:
$$E_f=\sum_{n\in\mathbb Z} |a_n|^2.$$
This is a Parseval identity for the sequence of functions $\{\operatorname{sinc}(t-n)\}_{n\in\mathbb Z}$, and it is based on the identity
\begin{equation}\label{a2}
\int_{\mathbb R} {\operatorname{sinc}}\big({\tau}-\lambda\big) {\operatorname{sinc}}\big({\tau}-\nu\big)d\tau= {\operatorname{sinc}}(\lambda-\nu ).
\end{equation}
occurred for any real numbers $\lambda$ and $\nu$. But Parseval identity ceases to be true if $n$ is substitutes with $\lambda_n\in\mathbb R$. This motivates the result of the paper, which is described in the following Theorem.
\begin{theorem}
\label{th:main}
Let $I=\{n\, |1\leq n\leq R, \, R\in\mathbb N \}$ be a finite set of integers, and let
\begin{equation}
f(t)=\sum_{n\in I}a_n \operatorname{sinc}(t-\lambda_n),
\end{equation}
where the $\lambda_n$ are real and satisfy
$$|\lambda_n-\lambda_m|\geq \gamma>\sqrt{\frac{1}{3}+\frac{\pi^2}{12}}, \ \ \ \forall n,m\in I.$$
Then
\begin{equation}
E_f\geq \left(1-\gamma^{-1} \sqrt{\frac{1}{3}+\frac{\pi^2}{12}}\right) \sum_{n\in I} |a_n|^2.
\end{equation}
\end{theorem}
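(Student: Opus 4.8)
The plan is to read $E_f$ as the Gram quadratic form of the family $\{\operatorname{sinc}(\cdot-\lambda_n)\}_{n\in I}$ and to peel off a diagonal ``Parseval'' part from an off-diagonal remainder that the separation hypothesis forces to be small. First I would expand
\[
E_f=\int_{\R}\Bigl|\sum_{n\in I}a_n\operatorname{sinc}(t-\lambda_n)\Bigr|^2dt=\sum_{n,m\in I}a_n\overline{a_m}\int_{\R}\operatorname{sinc}(t-\lambda_n)\operatorname{sinc}(t-\lambda_m)\,dt,
\]
and apply the reproducing identity (\ref{a2}) to each inner integral, collapsing it to $\operatorname{sinc}(\lambda_n-\lambda_m)$. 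Since $\operatorname{sinc}(0)=1$, the diagonal $n=m$ contributes exactly $\sum_{n\in I}|a_n|^2$, so that $E_f=\sum_{n\in I}|a_n|^2+R$ with $R:=\sum_{n\neq m}a_n\overline{a_m}\operatorname{sinc}(\lambda_n-\lambda_m)$. The theorem is then equivalent to the off-diagonal bound $|R|\le\gamma^{-1}\sqrt{\tfrac13+\tfrac{\pi^2}{12}}\sum_{n\in I}|a_n|^2$, after which $E_f\ge\sum|a_n|^2-|R|$ yields the claim.

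The entire difficulty sits in estimating $R$, and this is where I expect the main obstacle. Since $\operatorname{sinc}$ decays only like $|x|^{-1}$, for a merely $\gamma$-separated family the row sums $\sum_{m\neq n}|\operatorname{sinc}(\lambda_n-\lambda_m)|$ diverge, so a crude Schur or Gershgorin estimate of the remainder matrix is hopeless: one must exploit the oscillation of $\operatorname{sinc}$ rather than its absolute value. This is precisely the role of the Hilbert-type inequality for $\gamma$-separated nodes quoted from \cite{Mont}. Concretely, I would write $\operatorname{sinc}(\lambda_n-\lambda_m)=\sin\pi(\lambda_n-\lambda_m)\,/\,[\pi(\lambda_n-\lambda_m)]$ and split the sine as $\sin\pi(\lambda_n-\lambda_m)=\tfrac{1}{2i}\bigl(e^{i\pi(\lambda_n-\lambda_m)}-e^{-i\pi(\lambda_n-\lambda_m)}\bigr)$, so that with $c_n:=a_ne^{i\pi\lambda_n}$ and $d_n:=a_ne^{-i\pi\lambda_n}$ the remainder becomes a combination of two forms $\sum_{n\neq m}(\lambda_n-\lambda_m)^{-1}c_n\overline{c_m}$ and $\sum_{n\neq m}(\lambda_n-\lambda_m)^{-1}d_n\overline{d_m}$, to which the inequality from \cite{Mont} applies; each is controlled by a constant times $\gamma^{-1}\sum|a_n|^2$ (note $|c_n|=|d_n|=|a_n|$).

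The remaining, more computational, step is to track the constant so that the two contributions combine to $\gamma^{-1}\sqrt{\tfrac13+\tfrac{\pi^2}{12}}$. Here I anticipate the $\tfrac13$ to be the signature of the $L^2$ estimate $\sum_{m\neq n}\operatorname{sinc}^2(\lambda_n-\lambda_m)\le\gamma^{-2}\cdot\tfrac13$, which follows from $\operatorname{sinc}^2(x)\le(\pi x)^{-2}$ together with $\sum_{k\ge1}k^{-2}=\pi^2/6$ applied to the worst-case $\gamma$-spaced configuration of nodes, while the factor $\pi^2/12=\int_{-1/2}^{1/2}\pi^2x^2\,dx$ records the oscillatory (phase) part entering the Montgomery estimate. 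I would finally assemble $E_f\ge\bigl(1-\gamma^{-1}\sqrt{\tfrac13+\tfrac{\pi^2}{12}}\bigr)\sum_{n\in I}|a_n|^2$ and observe that the hypothesis $\gamma>\sqrt{\tfrac13+\tfrac{\pi^2}{12}}$ is exactly what keeps this lower bound strictly positive, so that $\{\operatorname{sinc}(\cdot-\lambda_n)\}_{n\in I}$ retains a positive lower (Riesz-type) frame bound. I would also remark that using the sharp form of the Hilbert inequality one could in principle sharpen the constant, so the value $\sqrt{\tfrac13+\tfrac{\pi^2}{12}}$ reflects the particular direct estimate chosen here rather than an intrinsic threshold.
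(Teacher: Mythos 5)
Your proposal is correct, and the heart of it---the bound on the off-diagonal term---goes by a genuinely different route than the paper. Both arguments start identically: expand $E_f$ via (\ref{a2}) into $\sum_{n\in I}|a_n|^2+R$ with $R=\sum_{n\neq m}a_n\overline{a_m}\operatorname{sinc}(\lambda_n-\lambda_m)$, and both recognize that absolute-value (Schur-type) estimates fail because of the slow decay of $\operatorname{sinc}$, so cancellation must be exploited. The paper then rewrites $\operatorname{sinc}(\lambda_n-\lambda_m)$ as $1/(x_n-y_m)$ with $x_n=\pi\lambda_n/\sin\pi(\lambda_n-\lambda_m)$ and $y_m=-x_m$ (note these ``sequences'' actually depend on both indices, which is a questionable step of the paper's own argument), and then uses the scaling limit $1/x=\lim_{\varepsilon\to0}\pi\varepsilon\csc(\pi\varepsilon x)$ to invoke its Lemma \ref{th:MV1}, a two-sequence $\csc$ variant of Montgomery--Vaughan proved from scratch via the skew-Hermitian eigenvalue device, Cauchy--Schwarz, and Lemma \ref{l:2}; the constant $\sqrt{1/3+\pi^2/12}$ is precisely $\sqrt{\frac{\pi^2}{3}\left(\frac{1}{4}+\frac{1}{\pi^2}\right)}$ emerging from that lemma. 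You instead split $\sin\pi(\lambda_n-\lambda_m)$ into its two exponentials, absorb the phases into twisted coefficients $c_n=a_ne^{i\pi\lambda_n}$, $d_n=a_ne^{-i\pi\lambda_n}$, and apply the Montgomery--Vaughan generalized Hilbert inequality $\left|\sum_{n\neq m}b_n\overline{b_m}/(\lambda_n-\lambda_m)\right|\leq\pi\gamma^{-1}\sum_n|b_n|^2$ for a single $\gamma$-separated sequence (this is the second bilinear form quoted from \cite{Mont} in Section \ref{s:1}, with sharp constant $\pi$). This route is cleaner---it works with one honestly defined sequence and needs no limiting argument---and it gives $|R|\leq\frac{1}{2\pi}\cdot2\pi\gamma^{-1}\sum_n|a_n|^2=\gamma^{-1}\sum_n|a_n|^2$, hence $E_f\geq(1-\gamma^{-1})\sum_n|a_n|^2$, which is strictly stronger than the stated bound since $\sqrt{1/3+\pi^2/12}>1$; the theorem follows a fortiori. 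What the paper's route buys is a self-contained re-proof of the needed $\csc$ inequality adapted to two sequences; what yours buys is rigor and a better constant, at the price of quoting the $1/(\lambda_n-\lambda_m)$ inequality of \cite{Mont} as a black box (your final constant does depend on the constant $\pi$ there being available).

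One correction to your constant-tracking paragraph: in your route the two contributions do not ``combine to $\gamma^{-1}\sqrt{1/3+\pi^2/12}$''; they combine to $\gamma^{-1}$. The number $\sqrt{1/3+\pi^2/12}$ is an artifact of the paper's Lemma \ref{th:MV1}: after Cauchy--Schwarz, the term $\frac{1}{3}$ comes from summing the bound $|\cot\pi x\,\csc\pi x|\leq\pi^{-2}\|x\|^{-2}$ over a $\delta$-spaced sequence, and $\frac{\pi^2}{12}$ from summing $\csc^2\pi x+|\cot\pi x\,\csc\pi x|\leq\frac{1}{4}\|x\|^{-2}$, not from the $L^2$/oscillation decomposition you sketch. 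Your closing remark that this threshold is not intrinsic to the problem is exactly right---your own argument demonstrates it.
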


\section{Results}
\label{s:1}
For the our purposes, we will use a well-known inequality. \emph{Hilbert's inequality} states that
$$\left|\sum_{n\neq m}\frac{a_n \bar{a}_m}{n-m}\right|\leq\pi \sum_n |a_n|^2$$
for any set of complex $a_n$, where the best possible constant $\pi$ was found by Schur \cite{Sch}. In \cite{Mont} authors obtained a precise bound for the more general bilinear forms:
$$\sum_{n\neq m}a_n\overline a_m\csc\pi(x_r-x_s), \ \ \ \sum_{n\neq m}\dfrac{a_n\overline a_m}{\lambda_r-\lambda_s}.$$
In the following, $\|\theta\|$ denotes the distance from $\theta$ to the nearest integer, that is, $\|\theta\|=\min_n |\theta-n|$. Moreover, $\min_{+} f$ will denotes the least positive value when $f$ ranges over a finite set of non-negative values.
We now give an useful Lemma.
\begin{figure}[tb]
\centering
\includegraphics[scale=0.30]{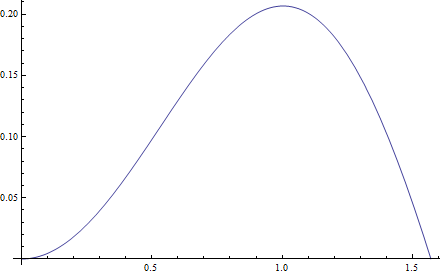}
\caption{Graphic of $g(\theta)=\frac{\pi^2}{4} \sin^2\theta-\theta^2(1+\cos\theta)$ for $\theta\in[0,\pi/2]$.}
\label{fig:fig1}
\end{figure}
\begin{figure}[tb]
\centering
\includegraphics[scale=0.30]{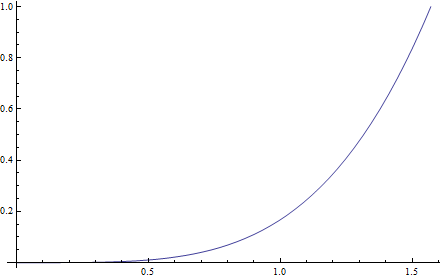}
\caption{Graphic of $\sin^2\theta-\theta^2\cos\theta$ for $\theta\in[0,\pi/2]$.}
\label{fig:fig2}
\end{figure}
\begin{lemma}
\label{l:2}
The inequalities
\begin{equation}
\label{eq:l1}
\csc^2\pi x+|\cot\pi x \csc \pi x|\leq\frac{1}{4}\|x\|^{-2}
\end{equation}
and
\begin{equation}
\label{eq:l2}
|\cot\pi x \csc \pi x|\leq\pi^{-2}\|x\|^{-2}
\end{equation}
hold for all real $x$.
\end{lemma}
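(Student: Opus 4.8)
The plan is to reduce both inequalities to two elementary one-variable inequalities on $[0,\pi/2]$ — precisely the nonnegativity of the two functions plotted in Figures~\ref{fig:fig1} and~\ref{fig:fig2} — and then to dispatch those by standard properties of $\sin$. First I would exploit symmetry. The quantities $\|x\|$, $\csc^2\pi x$ and $|\cot\pi x\,\csc\pi x|$ are all even and $1$-periodic in $x$: the sign flip of $\csc\pi x$ under $x\mapsto x+1$ is annihilated by the square in the first case and by the absolute value in the second. Hence it suffices to treat $x\in(0,1/2]$, where $\|x\|=x$. Setting $\theta=\pi x\in(0,\pi/2]$, so that $\|x\|=\theta/\pi$ and $\sin\theta,\cos\theta\ge 0$, the absolute values drop and I have $\csc^2\pi x+|\cot\pi x\,\csc\pi x|=(1+\cos\theta)/\sin^2\theta$ and $|\cot\pi x\,\csc\pi x|=\cos\theta/\sin^2\theta$.

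With this substitution, since $\|x\|^{-2}=\pi^2/\theta^2$, inequality~\eqref{eq:l1} becomes $4\theta^2(1+\cos\theta)\le\pi^2\sin^2\theta$, i.e. $g(\theta):=\tfrac{\pi^2}{4}\sin^2\theta-\theta^2(1+\cos\theta)\ge 0$, while~\eqref{eq:l2} becomes $\theta^2\cos\theta\le\sin^2\theta$, i.e. $\sin^2\theta-\theta^2\cos\theta\ge 0$. These are exactly the functions displayed in Figures~\ref{fig:fig1} and~\ref{fig:fig2}, so the whole Lemma reduces to proving these two nonnegativity statements on $[0,\pi/2]$.

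For the first, I would use the half-angle identities $1+\cos\theta=2\cos^2(\theta/2)$ and $\sin^2\theta=4\sin^2(\theta/2)\cos^2(\theta/2)$ to factor $g(\theta)=\cos^2(\theta/2)\bigl(\pi^2\sin^2(\theta/2)-2\theta^2\bigr)$. As $\cos^2(\theta/2)>0$, it remains to check $\pi\sin(\theta/2)\ge\sqrt{2}\,\theta$, that is, with $u=\theta/2\in(0,\pi/4]$, that $\frac{\sin u}{u}\ge\frac{2\sqrt{2}}{\pi}$. This is immediate: $u\mapsto\frac{\sin u}{u}$ is decreasing on $(0,\pi/2)$ and takes at the right endpoint $u=\pi/4$ the value $\frac{\sin(\pi/4)}{\pi/4}=\frac{2\sqrt{2}}{\pi}$, so the bound holds, with equality at $u=\pi/4$ — consistent with $g(\pi/2)=0$ and with the constant $\tfrac14$ in~\eqref{eq:l1} being attained at $x=1/2$.

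For the second, the endpoint $\theta=\pi/2$ is trivial ($\cos\theta=0$, so the inequality reads $1\ge 0$), and for $\theta\in(0,\pi/2)$ dividing by $\cos\theta>0$ turns $\sin^2\theta\ge\theta^2\cos\theta$ into $\sin\theta\tan\theta\ge\theta^2$, equivalently $(\sin\theta/\theta)^2\ge\cos\theta$. Here I would invoke the Cusa–Huygens inequality $\frac{\sin\theta}{\theta}>\frac{2+\cos\theta}{3}$ and reduce to the purely algebraic claim $\bigl(\frac{2+c}{3}\bigr)^2\ge c$ for $c=\cos\theta\in[0,1]$, which is just $(c-1)(c-4)\ge 0$. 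The main obstacle is really only these two scalar inequalities; both rest on a single monotonicity fact about $\sin u/u$ (directly for~\eqref{eq:l1}, and through Cusa–Huygens, itself a monotonicity statement, for~\eqref{eq:l2}), so no genuinely new estimate is needed once the symmetry reduction and the half-angle factorization are in place. I would close by recording the equality/limiting cases ($\theta=0$ in both, and $\theta=\pi/2$ for~\eqref{eq:l1}), which confirm that the constants $\tfrac14$ and $\pi^{-2}$ are sharp.
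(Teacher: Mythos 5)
Your proof is correct, and it follows the paper's reduction exactly: evenness and $1$-periodicity to restrict to $x\in(0,1/2]$, the substitution $\theta=\pi x$, and the equivalence of (\ref{eq:l1}) and (\ref{eq:l2}) with the nonnegativity on $[0,\pi/2]$ of $g(\theta)=\frac{\pi^2}{4}\sin^2\theta-\theta^2(1+\cos\theta)$ and of $\sin^2\theta-\theta^2\cos\theta$, which are precisely the two functions in Figures \ref{fig:fig1} and \ref{fig:fig2}. The difference lies in how those two scalar inequalities are settled: the paper simply points to the plots (``See Figures \ref{fig:fig1} and \ref{fig:fig2}''), i.e.\ a graphical verification, whereas you actually prove them --- the half-angle factorization $g(\theta)=\cos^2(\theta/2)\bigl(\pi^2\sin^2(\theta/2)-2\theta^2\bigr)$ reduces the first to the monotonicity of $\sin u/u$ on $(0,\pi/2)$ together with its endpoint value $2\sqrt{2}/\pi$ at $u=\pi/4$, and Cusa--Huygens plus the algebraic inequality $(2+c)^2\geq 9c$ for $c\in[0,1]$ gives the second. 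Both steps check out, including the equality case $g(\pi/2)=0$, which shows the constant $\tfrac14$ in (\ref{eq:l1}) is attained at $x=1/2$. So your argument buys genuine rigor that the paper's proof lacks; incidentally, your symmetry reduction also repairs a small sloppiness in the paper, which passes from $0\leq\|x\|\leq\tfrac12$ to $0\leq\theta\leq\pi/2$ with $\theta=\pi x$, without invoking the periodicity and evenness that actually justify replacing $x$ by $\|x\|$ there.
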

\begin{proof}
Let $\theta=\pi x$. We notice that, for an integer $n$, $0\leq\|x\|=\min_n |x-n|\leq \frac{1}{2}$ and so $0\leq \theta \leq \pi/2$. For inequality (\ref{eq:l1}), it is sufficient to show that $g(\theta)\geq0$ in $[0,\pi/2]$, where
$$g(\theta)=\frac{\pi^2}{4} \sin^2\theta-\theta^2(1+\cos\theta).$$
For inequality (\ref{eq:l2}) one shows that:
$$\sin^2\theta-\theta^2\cos\theta\geq0$$
for $\theta\in[0,\pi/2]$. See Figures \ref{fig:fig1} and \ref{fig:fig2}.
\end{proof}
Now we readapt and prove a part of Theorem 1, taken from \cite{Mont}.
\begin{lemma}
\label{th:MV1}
Let $x_1, x_2,... ,x_R$ and $y_1, y_2,... ,y_R$ denote real numbers which are distinct modulo 1, and suppose that
$$\delta=\min_{n,m}{}_{+}\|x_n-y_m\|, \ \ x_n\neq y_m \ \forall n,m=1,\dots,R.$$
Then
\begin{equation}
\label{eq:montvaug}
\left|\sum_{n, m}a_n\overline a_m\csc\pi(x_n-y_m)\right|\leq\delta^{-1}\sqrt{\frac{1}{3}+\frac{\pi^2}{12}}\,\sum_{n=1}^R |a_n|^2.
\end{equation}
where $n$ and $m$ are distinct.
\end{lemma}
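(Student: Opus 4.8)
The plan is to follow the Montgomery--Vaughan duplication argument, converting the bilinear form into a quadratic form to which the pointwise estimates of Lemma~\ref{l:2} can be applied directly. Writing $B=\sum_{n,m}a_n\overline a_m\csc\pi(x_n-y_m)$ and grouping the sum according to the index carrying the conjugated coefficient, the Cauchy--Schwarz inequality yields
$$|B|^2\le\Big(\sum_m|a_m|^2\Big)\sum_m\Big|\sum_{n}a_n\csc\pi(x_n-y_m)\Big|^2=\Big(\sum_m|a_m|^2\Big)\,\Sigma ,$$
so it suffices to prove $\Sigma\le\big(\tfrac13+\tfrac{\pi^2}{12}\big)\delta^{-2}\sum_n|a_n|^2$. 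Expanding the inner square splits $\Sigma$ into a diagonal part, a sum of $\csc^2\pi(x_n-y_m)$, and an off-diagonal part carrying products $\csc\pi(x_n-y_m)\,\csc\pi(x_{n'}-y_m)$ with $n\neq n'$.

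First I would dispose of the off-diagonal products with the partial-fraction identity
$$\csc u\,\csc v=\csc(u-v)\,(\cot v-\cot u),$$
which applies because $u-v=\pi(x_n-x_{n'})$ is independent of the summation index $m$. This rewrites each product of two cosecants as a single cosecant in the $x$-difference multiplied by a difference of cotangents, i.e.\ as terms of the shape $\csc\cot$. The diagonal part is kept as a sum of $\csc^2\pi(x_n-y_m)$.

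Next I would invoke Lemma~\ref{l:2}: the consequence $\csc^2\pi\theta\le\tfrac14\|\theta\|^{-2}$ of (\ref{eq:l1}) controls the diagonal terms, while $|\cot\pi\theta\,\csc\pi\theta|\le\pi^{-2}\|\theta\|^{-2}$ from (\ref{eq:l2}) controls the $\csc\cot$ terms produced by the off-diagonal part. In each case the resulting sum of $\|\cdot\|^{-2}$ runs over points separated by at least $\delta$, so it is dominated by $2\sum_{k\ge1}(k\delta)^{-2}=\tfrac{\pi^2}{3}\delta^{-2}$ (here $\sum_{k\ge1}k^{-2}=\zeta(2)=\pi^2/6$). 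Collecting the two contributions produces the coefficient
$$\Big(\tfrac14+\pi^{-2}\Big)\tfrac{\pi^2}{3}=\tfrac{\pi^2}{12}+\tfrac13,$$
which is exactly the square of the asserted constant, and (\ref{eq:montvaug}) follows on taking square roots.

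The delicate point, which I expect to be the main obstacle, is the off-diagonal bookkeeping. The partial-fraction step does not output the clean $\csc\cot$ terms alone: summing the cotangent differences over the free index leaves additional cotangent sums $C_n=\sum_m\cot\pi(x_n-y_m)$ together with the index-exclusion corrections, and the real work is to show that these extra contributions cancel (they assemble into a real, antisymmetrically weighted sum) or are absorbed, so that only the $\csc\cot$ part with the stated constant survives. This is also the step where the genuine separation of the points by $\delta$—and not mere distinctness modulo $1$—is indispensable, since it is precisely what lets the $\|\cdot\|^{-2}$ series be compared with $\zeta(2)$ without incurring a spurious factor of $R$.
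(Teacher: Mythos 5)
Your proposal follows the same Montgomery--Vaughan route as the paper's proof: Cauchy--Schwarz to pass to a quadratic form, splitting into a diagonal part and an off-diagonal part, the partial-fraction identity for products of cosecants, the two bounds of Lemma~\ref{l:2}, and the spacing estimate $2\sum_{k\geq1}(k\delta)^{-2}=\frac{\pi^2}{3}\delta^{-2}$; your target constant $(\frac{1}{4}+\pi^{-2})\frac{\pi^2}{3}=\frac{1}{3}+\frac{\pi^2}{12}$ is the correct one. (That you group on $m$ rather than on $n$, so the partial fractions produce cosecants of $x$-differences instead of $y$-differences, is immaterial.) However, the step you flag as ``the delicate point'' and leave open is precisely the crux of the argument, and it does not take care of itself: for an arbitrary coefficient vector the extra cotangent sums produced by the partial-fraction step do \emph{not} cancel, nor can they be absorbed by size estimates --- cotangent sums over $\delta$-separated points are only bounded like $\sum_k(k\delta)^{-1}$, so crude bounds on these triple sums would at best introduce logarithmic losses and destroy the clean constant. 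The paper kills them with the Montgomery--Vaughan eigenvector device, which is the one idea missing from your outline: since the matrix $C=(\csc\pi(y_n-y_m))$ is skew-Hermitian, the coefficient vector is normalized to be an eigenvector of $C$, so that $\sum_{n}{}^\prime a_n\csc\pi(y_n-y_m)=i\mu a_m$ with $\mu$ real (equation (\ref{eq:pv0})). Substituting this relation into the inner sums collapses both triple sums $S_3$ and $S_4$ of (\ref{eq:pv3})--(\ref{eq:pv4}) to $-i\mu\sum_{n,m}{}^\prime|a_m|^2\cot\pi(x_n-y_m)$, whence $S_3=S_4$ and only the term $2\operatorname{Re}S_5$ of (\ref{eq:pv5}) survives. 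Without this reduction to eigenvectors (or a genuine substitute for it), the proof does not close; in your orientation of Cauchy--Schwarz the same device would have to be applied to the skew-Hermitian matrix $(\csc\pi(x_n-x_{n'}))$.

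A secondary slip: your attribution of the two bounds of Lemma~\ref{l:2} does not yield the stated constant. After the inequality $2|a_na_m|\leq|a_n|^2+|a_m|^2$, the surviving off-diagonal term produces \emph{two} weighted sums of $|\csc\cot|$, not one. To land on $\frac{1}{4}+\pi^{-2}$ you must merge one of them with the diagonal using the full inequality (\ref{eq:l1}), namely $\csc^2\pi x+|\cot\pi x\,\csc\pi x|\leq\frac{1}{4}\|x\|^{-2}$ (this is exactly why the cross term is built into (\ref{eq:l1}) rather than the bare bound on $\csc^2$), and bound only the remaining one by (\ref{eq:l2}). Bounding the diagonal alone by $\frac{1}{4}\|\cdot\|^{-2}$ and both $\csc\cot$ sums by $\pi^{-2}\|\cdot\|^{-2}$, as you describe, gives $\frac{1}{4}+2\pi^{-2}$, i.e.\ only the weaker constant $\sqrt{\frac{2}{3}+\frac{\pi^2}{12}}$ in (\ref{eq:montvaug}).
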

\begin{proof}
Our proof is modelled on Montgomery and Vaughan's proof \cite{Mont} of Hilbert's inequality. In \cite{Mont} authors proven that the bilinear form
$$\sum_{n, m}a_n\overline a_m\csc\pi(x_n-x_m),$$
where $n\neq m$, is skew-Hermitian. For this proof we consider the bilinear form:
$$\sum_{n, m}a_n\overline a_m\csc\pi(y_n-y_m)$$
for $n\neq m$. Let us consider
$$\sum_{n}a_n\,\csc\pi(y_n-y_m)= \sum_{n}a_n\,c_{n,m}$$
where $c_{n,m}=\csc\pi(y_n-y_m)$. The RHS is the product of eigenvector $\textbf{a}=(a_1,\dots,a_R)^t$ for the mth column of matrix $C:=(c_{n,m})$. Since the bilinear form under consideration is skew-Hermitian, eigenvalues of matrix $C$ are all purely imaginary or zero, namely there exists a real number $\mu$ such that: $\textbf{a}^t C \textbf{a}=i\mu$. Hence,
\begin{equation}
\label{eq:pv0}
\sum_{n}a_n\,\csc\pi(y_n-y_m)= i\mu a_m
\end{equation}
for $m\neq n$ and $1\leq n,m\leq R$. Also, we may normalize so that $\sum_n |a_n|^2=1$. By Cauchy's inequality,
$$\left|\sum_{n, m}a_n\overline a_m\csc\pi(x_n-y_m)\right|^2\leq \sum_{n}\left| \sum_{m} {}^\prime \bar a_m \csc\pi(x_n-y_m)\right|^2 $$
where $\sum_{m} {}^\prime$ means that all indexes are different. Also,
$$ \sum_{n}\left| \sum_{m} {}^\prime \bar a_m \csc\pi(x_n-y_m)\right|^2 = $$
$$=\sum_{m,p} \bar a_m a_p \sum_{n} {}^\prime \csc\pi(x_n-y_m) \csc\pi(x_n-y_p)$$
\begin{equation}
\label{eq:pv}
=S_1+S_2,
\end{equation}
where
\begin{equation}
\label{eq:pv1}
S_1=\sum_{m} |a_m|^2 \sum_{n} {}^\prime \csc^2\pi(x_n-y_m)
\end{equation}
and
\begin{equation}
\label{eq:pv2}
S_2=\sum_{m\neq p} \bar a_m a_p \sum_{n} {}^\prime \csc\pi(x_n-y_m) \csc\pi(x_n-y_p).
\end{equation}
In $S_2$ we may write
$$\csc\pi(x_n-y_m) \csc\pi(x_n-y_p)=$$
$$=\csc\pi(x_m-y_p)\left[\cot\pi(x_n-y_m)-\cot\pi(x_n-y_p) \right].$$
According to \cite{Mont} (Proof of Theorem 1, p. 79) we use this to split $S_2$ in the following way: $S_2=S_3-S_4+2 \operatorname{Re} S_5$, where
\begin{equation}
\label{eq:pv3}
S_3=\sum_{n,m, p} {}^\prime \bar a_m a_p \csc\pi(y_m-y_p)\,\cot\pi(x_n-y_m),
\end{equation}
\begin{equation}
\label{eq:pv4}
S_4=\sum_{n,m, p} {}^\prime \bar a_m a_p \csc\pi(y_m-y_p)\,\cot\pi(x_n-y_p),
\end{equation}
and
\begin{equation}
\label{eq:pv5}
S_5=\sum_{n,m} {}^\prime \bar a_m a_n \csc\pi(x_n-y_m)\,\cot\pi(x_n-y_m).
\end{equation}
We show now that $S_3=S_4$. We see from (\ref{eq:pv0}) and (\ref{eq:pv3}) that
\begin{align}
S_3& =\sum_{n,m} {}^\prime \bar a_m\, \cot\pi(x_n-y_m) \sum_{p} {}^\prime a_p \csc\pi(y_m-y_p) \notag \\
& =\sum_{n,m} {}^\prime \bar a_m\, \cot\pi(x_n-y_m)  \left(-i\mu a_m\right)\notag \\
& =-i\mu\sum_{n,m} {}^\prime |a_m|^2 \, \cot\pi(x_n-y_m).
\end{align}
Similarly, from (\ref{eq:pv0}) and (\ref{eq:pv4}),
\begin{align}
S_4& =\sum_{n, p} {}^\prime  a_p \cot\pi(x_n-y_p) \sum_{m} {}^\prime \bar a_m\, \csc\pi(y_m-y_p)  \notag \\
& =\sum_{n, p} {}^\prime  a_p \cot\pi(x_n-y_p)  \left(-i\mu \bar a_p\right)\notag \\
& =-i\mu\sum_{n,p} {}^\prime |a_p|^2 \, \cot\pi(x_n-y_p).
\end{align}
Therefore, $S_3=S_4$, so that $S_1+S_2=S_1+2 \operatorname{Re} S_5\leq S_1+2 |S_5|$. We use the inequality $2 |a_n a_m|\leq |a_n|^2+|a_m|^2$ in (\ref{eq:pv5}), so that (\ref{eq:pv1}) and (\ref{eq:pv5}) give
$$S_1+S_2\leq \sum_{m,n} {}^\prime |a_m|^2  \csc^2\pi(x_n-y_m)+$$
$$+\sum_{n,m} {}^\prime \left(|a_n|^2+|a_m|^2\right) \left|\csc\pi(x_n-y_m)\,\cot\pi(x_n-y_m)\right|$$
$$=\sum_{m,n} {}^\prime |a_m|^2 \Bigl(\csc^2\pi(x_n-y_m)+$$
$$+\left|\csc\pi(x_n-y_m)\,\cot\pi(x_n-y_m)\right|\Bigr)+$$
$$+ \sum_{m,n} {}^\prime |a_n|^2\, \left|\csc\pi(x_n-y_m)\,\cot\pi(x_n-y_m)\right|.$$
By Lemma \ref{l:2} this is
$$\leq \frac{1}{4}\sum_m |a_m|^2\sum_n {}^\prime \|x_n-y_m\|^{-2} +$$
$$+ \frac{1}{\pi^2} \sum_n |a_n|^2\sum_m {}^\prime \|x_n-y_m\|^{-2}.$$
A remark similar to that conducted in \cite{Mont}, leads to be conclude that the $x_n$ and the $y_m$ are spaced from each other by at least $\delta$, so that
$$\sum_m {}^\prime \|x_n-y_m\|^{-2}\leq 2 \sum_{k=1}^\infty (k\delta)^{-2}=\frac{\pi^2}{3} \delta^{-2}.$$
Hence,
$$S_1+S_2\leq \frac{\pi^2}{3} \delta^{-2} \left(\frac{1}{\pi^2}+\frac{1}{4}\right)$$
where we have considered $\sum_n |a_n|^2=1$.
\end{proof}

We now able to prove the result of the paper.
\begin{proof}[Proof of Theorem \ref{th:main}]
Put, by hypothesis,
$$\gamma=\min_{n,m}{}_{+}|\lambda_n-\lambda_m|>\sqrt{\frac{1}{3}+\frac{\pi^2}{12}}.$$
Write $\int_{-\infty}^\infty |f(t)|^2 dt$:
$$\sum_{m,n} a_n \bar{a}_m \int_{-\infty}^{+\infty} \operatorname{sinc}(\lambda_n -t)\operatorname{sinc}(\lambda_m -t) dt$$
which is equal to
\begin{equation}
\label{eq:1a}
\sum_n |a_n|^2 + \sum_{m,n} {}^\prime a_n \bar{a}_m \operatorname{sinc}(\lambda_n-\lambda_m).
\end{equation}
Furthermore,
$$\frac{\sin\pi(\lambda_n-\lambda_m)}{\pi(\lambda_n-\lambda_m)}=\frac{1}{\frac{\pi\lambda_n}{\sin\pi(\lambda_n-\lambda_m)}-\frac{\pi\lambda_m}{\sin\pi(\lambda_n-\lambda_m)}}$$
$$=\frac{1}{x_n+x_m}$$
where $x_n:=\frac{\pi\lambda_n}{\sin\pi(\lambda_n-\lambda_m)}$. Putting $y_m=-x_m$ above equality is rewritten as $\frac{1}{x_n-y_m}$, and
$$\sum_{m,n} {}^\prime a_n \bar{a}_m \operatorname{sinc}(\lambda_n-\lambda_m)=\sum_{m,n} {}^\prime \frac{a_n \bar{a}_m}{x_n-y_m} .$$
To prove the Theorem, we note that if $x$ is any member of a bounded interval, then $\|\varepsilon x\| =\varepsilon |x|$ whenever $\varepsilon$ is sufficiently small. Moreover,
$$\frac{1}{x_n-y_m}=\lim_{\varepsilon\rightarrow0}\pi \varepsilon \csc\pi \varepsilon (x_n-y_m)$$
so that we can appeal to Lemma \ref{th:MV1}:
$$\left|\sum_{m,n} {}^\prime \frac{a_n \bar{a}_m}{x_n-y_m} \right|=\pi \varepsilon \left|\sum_{m,n} {}^\prime a_n \bar{a}_m\, \csc\pi \varepsilon (x_n-y_m) \right|$$
$$\leq\pi \varepsilon\, \delta^{-1}\sqrt{\frac{1}{3}+\frac{\pi^2}{12}}\,\sum_{n\in I} |a_n|^2,$$
where, for $\varepsilon\rightarrow0$,
$$\delta=\min_{n,m}{}_{+}\|\varepsilon x_n-\varepsilon y_m\|=\varepsilon\min_{n,m}{}_{+}|x_n-y_m|, $$
$$x_n\neq y_m \ \ \forall n,m=1,\dots,R.$$
Since $x_n:=\frac{\pi\lambda_n}{\sin\pi(\lambda_n-\lambda_m)}$, $y_m=-x_m$,
$$\delta =\varepsilon \min_{n,m}{}_{+}\left|\frac{\pi\lambda_n-\pi\lambda_m}{\sin\pi(\lambda_n-\lambda_m)}\right|$$
and since $\left|\sin\pi(\lambda_n-\lambda_m)\right|\leq 1$, we have
$$
  \delta \geq \varepsilon \pi \min_{n,m}{}_{+} |\lambda_n-\lambda_m|= \varepsilon\pi \gamma
$$
Accordingly,
$$\left|\sum_{m,n} {}^\prime \frac{a_n \bar{a}_m}{x_n-y_m} \right|=\pi \varepsilon \left|\sum_{m,n} {}^\prime a_n \bar{a}_m\, \csc\pi \varepsilon (x_n-y_m) \right|$$
$$\leq\gamma^{-1}\sqrt{\frac{1}{3}+\frac{\pi^2}{12}}\,\sum_{n\in I} |a_n|^2.$$
Thus, an appeal to (\ref{eq:1a}) completes the proof of the Theorem:
$$E_f\geq \left( 1- \gamma^{-1}\sqrt{\frac{1}{3}+\frac{\pi^2}{12}}\right)\sum_{n\in I} |a_n|^2.$$
\end{proof}
As one reads on \cite{Mont}, it follows from a paper of Hellinger and Toeplitz (\cite{Hel} and \cite{Mont}) that Theorem \ref{th:main} and Lemma \ref{th:MV1} hold also for infinite sums, provided that $\min{}_{+} f$ is replaced by $\inf{}_{+} f$. It is also possible to consider bilateral series if we put $\lambda_{-n}=-\lambda_n$ for $n=1,2,\dots$.

An estimate from above is immediate employing same steps involved used in the proof of theorem \ref{th:main}. Indeed, from equation (\ref{eq:1a}) and by triangle inequality:
$$\sum_n |a_n|^2 + \sum_{m,n} {}^\prime a_n \bar{a}_m \operatorname{sinc}(\lambda_n-\lambda_m)\leq $$
$$\leq\left(1+\gamma^{-1}\sqrt{\frac{1}{3}+\frac{\pi^2}{12}}\right) \sum_n |a_n|^2$$
where $\gamma$ is defined as in theorem \ref{th:main}.
\begin{corollary}
\label{cor:main}
Let $I=\{n\, |1\leq n\leq R, \, R\in\mathbb N \}$ be a finite set of integers, and let
\begin{equation}
f(t)=\sum_{n\in I}a_n \operatorname{sinc}(t-\lambda_n),
\end{equation}
where the $\lambda_n$ are real and satisfy
$$|\lambda_n-\lambda_m|\geq \gamma>\sqrt{\frac{1}{3}+\frac{\pi^2}{12}}, \ \ \ \forall n,m\in I.$$
Then
\begin{equation}
E_f\asymp \sum_{n\in I} |a_n|^2.
\end{equation}
\end{corollary}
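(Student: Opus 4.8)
The plan is to observe that the corollary is nothing more than the conjunction of a lower bound and an upper bound, both of which are already at hand. Recall that $E_f\asymp\sum_{n\in I}|a_n|^2$ means exactly that there exist positive constants $c_1,c_2$ with
$$c_1\sum_{n\in I}|a_n|^2\leq E_f\leq c_2\sum_{n\in I}|a_n|^2,$$
so the task reduces to exhibiting such constants. I would therefore produce each inequality separately and then read off the constants.

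First I would handle the lower bound by a direct appeal to Theorem \ref{th:main}: under the stated spacing hypothesis it yields
$$E_f\geq\left(1-\gamma^{-1}\sqrt{\frac{1}{3}+\frac{\pi^2}{12}}\right)\sum_{n\in I}|a_n|^2,$$
so one may take $c_1=1-\gamma^{-1}\sqrt{\tfrac{1}{3}+\tfrac{\pi^2}{12}}$. The one point requiring a genuine (if brief) check is the positivity of $c_1$: this is precisely where the \emph{strict} inequality $\gamma>\sqrt{\tfrac{1}{3}+\tfrac{\pi^2}{12}}$ in the hypothesis is used, since it forces $\gamma^{-1}\sqrt{\tfrac{1}{3}+\tfrac{\pi^2}{12}}<1$ and hence $c_1>0$.

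Next I would obtain the upper bound by rerunning the final estimate of the proof of Theorem \ref{th:main} with the triangle inequality in place of the signed cancellation. Starting from the expansion (\ref{eq:1a}) of $E_f$ and bounding the off-diagonal bilinear form exactly as before, but now adding rather than subtracting, gives
$$E_f\leq\left(1+\gamma^{-1}\sqrt{\frac{1}{3}+\frac{\pi^2}{12}}\right)\sum_{n\in I}|a_n|^2,$$
which is the inequality already recorded in the remark preceding the corollary; thus $c_2=1+\gamma^{-1}\sqrt{\tfrac{1}{3}+\tfrac{\pi^2}{12}}$, manifestly positive. Combining the two displays with these explicit values of $c_1$ and $c_2$ yields $E_f\asymp\sum_{n\in I}|a_n|^2$ and completes the proof.

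I do not expect any substantive obstacle here: all of the analytic work, namely the control of $\sum_{m,n}'a_n\bar a_m\operatorname{sinc}(\lambda_n-\lambda_m)$ via Lemma \ref{th:MV1}, has already been carried out in Theorem \ref{th:main} and in the subsequent remark. The only delicate point is the bookkeeping observation that the \emph{same} quantity $\gamma^{-1}\sqrt{\tfrac{1}{3}+\tfrac{\pi^2}{12}}$ appears with a minus sign in the lower estimate and a plus sign in the upper estimate, and that the hypothesis on $\gamma$ is exactly what makes the lower constant strictly positive so that the two-sided comparison is nontrivial.
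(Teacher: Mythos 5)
Your proposal is correct and follows exactly the paper's own route: the lower bound is quoted from Theorem \ref{th:main} (with the strict inequality $\gamma>\sqrt{\tfrac{1}{3}+\tfrac{\pi^2}{12}}$ guaranteeing $c_1>0$), and the upper bound is the triangle-inequality estimate applied to the expansion (\ref{eq:1a}), which is precisely the computation the paper records just before the corollary. Nothing further is needed.
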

\begin{remark}
Write $E_f \asymp \sum_{n\in I} |a_n|^2$ means that
$$c_1 \sum_{n\in I} |a_n|^2 \leq E_f\leq c_2 \sum_{n\in I} |a_n|^2$$
with two constants $c_1, c_2 > 0$, independent of the particular form of $f(t)$, except for the assumption $|\lambda_n-\lambda_m|\geq \gamma>\sqrt{\frac{1}{3}+\frac{\pi^2}{12}}$, $\forall n,m\in I$.
\end{remark}




\vfill
\bibliographystyle{apalike}
{\small
\bibliography{example}}

\vfill
\end{document}